\newtheorem{theorem}{Theorem}[section]
\newtheorem{lemma}[theorem]{Lemma}
\newtheorem{rmk}[theorem]{Remark}
\newtheorem{corollary}[theorem]{Corollary} 
\newtheorem{defn}[theorem]{Definition} 
\title{\LARGE \bf
Fault Tolerant Control for Networked Mobile Robots
}
\author{Pietro Pierpaoli$^{1}$, Dominique Sauter$^{2}$, and Magnus Egerstedt$^{1}$
\thanks{$^{1}$Pietro Pierpaoli and Magnus Egerstedt are with School of Electrical and Computer Engineering,
        Georgia Institute of Technology, Atlanta, GA 30332 USA
        {\tt\small \{pietro.pierpaoli,magnus\}@gatech.edu }}%
\thanks{$^{2}$ CRAN CNRS-UMR 7039, Université de Lorraine, Vandoeuvre-les-Nancy, France 
        {\tt\small dominique.sauter@univ-lorraine.fr}}%
}
\begin{document}

\maketitle
\thispagestyle{empty}
\pagestyle{empty}

\begin{abstract}                
Teams of networked autonomous agents have been used in a number of applications, such as mobile sensor networks and intelligent transportation systems. However, in such systems, the effect of faults and errors in one or more of the sub-systems can easily spread throughout the network, quickly degrading the performance of the entire system. In consensus-driven dynamics, the effects of faults are particularly relevant because of the presence of unconstrained rigid modes in the transfer function of the system. Here, we propose a two-stage technique for the identification and accommodation of a biased-measurements agent, in a network of mobile robots with time invariant interaction topology. We assume these interactions to only take place in the form of relative position measurements. A fault identification filter deployed on a single observer agent is used to estimate a single fault occurring anywhere in the network. Once the fault is detected, an optimal leader-based accommodation strategy is initiated. Results are presented by means of numerical simulations and robot experiments.
\end{abstract}


\section{Introduction}
\label{sec:introduction}
Cooperation in multi-agent systems can lead to highly coordinated behaviors even when individual agents in the network have limited skills and information. For instance, a team of cooperating robots can efficiently adapt to changes in the environment and perform sequential tasks, with potentially limited supervision. The degree of interaction depends on the sensor and communication architectures, and can be conveniently modeled using elements from graph 
theory~\cite{olfati2006flocking},~\cite{mesbahi2010graph}.

As an example, {\it consensus theory} constitutes a fundamental tool for the design of control and estimation protocols in multi-agent 
systems~\cite{garin2010survey}. 
Despite their efficiency and versatility, consensus-based algorithms are vulnerable to drift, and therefore, measurements errors and hardware faults can introduce disturbances that are difficult to correct~\cite{xiao2007distributed}.

This paper addresses this problem in the context of Fault Tolerant Control Systems (FTCS). The study of FTCS addresses the design of active control systems capable of automatically detecting a fault and performing the actions required in order to maintain acceptable 
performances~\cite{zhang2008bibliographical}. FTCS design inherently involves a multi-stage process: first, a Fault Detection and Identification (FDI) system must provide precise information about the fault; then, during the Fault Accommodation (FA) (or mitigation) stage, an appropriate control compensates for the fault. 

In this paper we describe a Fault Identification Filter (FIF) to be deployed on multi-agent robotic systems, performing linear agreement ({\it consensus}) and formation control protocols. The proposed technique relies on a bank of linear observers deployed on board of an {\it observer} agent in charge of detecting a single, time-invariant fault occurring in any of the nodes of the network, including itself. Once the fault is detected, a leader, not necessarily corresponding to the observer, compensates for it. We consider the underlying graph representing the interaction between the agents to be fixed in time and completely controllable with respect to the leader node; we also assume robots interactions to occur as relative distance measurements only.

\subsection{Related Work}
Many different approaches have been proposed for the design of fault tolerant distributed Network Control Systems (NCS) (see for example~\cite{patton2007generic} and references therein). In the context of distributed networked agents equipped with sensors, both noise and faults make measurements unreliable, which has been approached using distributed Kalman filtering~\cite{olfati2007distributed}, Bayesian~\cite{luo2006distributed} and Dempster-Shefer frameworks~\cite{premaratne2009dempster}. When the states of the agents directly depends on neighbors' relative states, the ability to restore the nominal  pre-fault performance, becomes more complex, and depends on the underlying interaction graph between the nodes. To this end, the problem of networked control systems usually targets the selection of valid control nodes and the study of the interaction betwen topology and controllability~\cite{liu2011controllability},~\cite{pasqualetti2014controllability},~\cite{chapman2013semi}. 

In robotics, fault tolerant controls have been investigated in many different contexts, including, for example, flight control systems~\cite{steinberg2005historical}, manipulators~\cite{visinsky1995dynamic}, and quadrocopters subject to the loss of motors~\cite{mueller2014stability}. Resilience of multi-agent robotic systems to errors, faults, and adversarial attacks has also been investigated. In~\cite{bandyopadhyay2017probabilistic}, fault-resistance is achieved by assuming agents controlled by independent probabilistic processes. However, when agents' controllers depend on neighbors' state, the spreading of faults and errors becomes difficult to control. 

In~\cite{shames2011distributed} 
a filter composed by a bank of linear observers, where each observer can detect faults in one of its nearest neighbors is proposed. After fault occurrence, the faulty node is assumed to be removed from the network. Heterogeneous multi-agent systems are considered 
in~\cite{davoodi2014distributed}, 
where a LMI-based solution to the distributed FDI problem is proposed. Multiple and simultaneous faults occurring in each agent and its nearest neighbors were detected considering environmental noise and disturbances. 
In~\cite{pasqualetti2012consensus}, 
{\it misbehaving} nodes are studied in the context of linear consensus dynamics, where both genuine random faults and malicious messages are considered. The fault identification was then studied in terms of the connectivity properties of the network.

The formulation used in this paper builds on the techniques described in~\cite{liu1997fault} and~\cite{keller1999fault}, in which a linear observer is designed such that the filter residuals possess some desired directional properties. The original idea was extended to linear networked control systems with communication delays in~\cite{sauter2009robust} and to discrete-time switched linear systems in~\cite{rodrigues2006fault}. 

In Section~\ref{sec:fif} we describe the dynamics of the robots and the fault detection filter. In Section~\ref{sec:optimalAccom.}, we discuss the leader optimal FA strategy, presenting numerical simulations for a multi-agent robotic system. In Section~\ref{sec:formation}, we extend the results to a formation control protocol, which was tested on real robots. Final remarks are reported in Section~\ref{sec:conc}.
\section{Fault Identification Filter Under Pure Consensus Dynamics}
\label{sec:fif}
\subsection{System Dynamics and Fault Modeling}
Consider a collection of $n$ mobile robots, located in a planar connected, and compact domain $\mathcal{D}\subset\mathbb{R}^2$. Let $x_i \in \mathbb{R}^2$, for $i=1\dots,n$, denote the $i^{\text{th}}$ agent's position. We assume each agent to interact with a non-empty set of other robots. Generally, inter-agent interactions can be conveniently described by an undirected graph $\mathcal{G}=(V,E)$, where $V$ is the set of $n$ nodes representing the agents, and $E$ is an unordered list of node pairs corresponding to interacting robots. We assume undirected networks, i.e., $(i,j)=(j,i)$ and we let $\mathcal{G}$ be connected and constant at all times. 

The {\it neighborhood set} of agent $i$, denoted by $\mathcal{N}_i$, $i=1,\dots,n$, is the set of all vertexes connected to node $i$. Also, $m_i$ is the {\it degree} of vertex $i$, defined as the number of vertexes connected to node $i$, i.e. $m_i=|\mathcal{N}_i|$.

To start the development, we initially consider a team of robots performing a consensus protocol, which represents a general starting point for other more complex possible behaviors. Assuming a discrete time model, with positive step size $\epsilon$, and temporally indexed by $k$, the update equation for agent $i$ is
\begin{equation} \label{eq:consensus}
x_i(k+1) = x_i(k)-\epsilon\sum_{j \in \mathcal{N}_i} (x_i(k)-x_j(k)) \quad i={1,\dots,n}.
\end{equation}
The behavior emerging from the dynamics in~(\ref{eq:consensus}) is a pure consensus dynamics, leading to the rendezvous of the agents at the centroid of their initial configuration (see for example~\cite{mesbahi2010graph} and references therein).

We write the complete state of the system in the compact form ${\bf x}=[x_1^T,\dots,x_n^T]^T$, where ${\bf x}\in\mathbb{R}^{2n}$. It is possible to represent the complete update equation as:
\begin{equation}\label{eq:consensusCompact}
{\bf x}(k+1) = A{\bf x}(k) 
\end{equation}
where, $A=( I_{2n}-\epsilon L\otimes I_2)\in\mathbb{R}^{2n \times 2n}$, $\otimes$ is the Kronecker product, $I_p$ the identity matrix of size $p\times p$, and $L\in\mathbb{R}^{n \times n}$ is the Laplacian of $\mathcal{G}$. 

Now, assume that at time step $k_d>0$, an a-priori unknown agent, indexed by $i^*$ experiences a fault. Without loss of generality, we model this fault as an exogenous velocity input $\delta\in\mathbb{R}^2$ acting on the system; therefore, including the fault in the state update equation~(\ref{eq:consensusCompact}) leads to the switched system:
\begin{align} \label{eq:switchedsystem}
{\bf x}(k+1) &= A{\bf x}(k) + \epsilon F_{i^*} \, \delta \, \nu(k) \\
\nu(k) &=
\begin{cases}
1 \,\, &\text{if} \,\, k \geq k_d \\
0 \,\, &\text{if} \,\, k <    k_d, \\
\end{cases}
\end{align}
where $F_{i^*}\in\mathbb{R}^{2n\times 2}$ is the fault distribution matrix corresponding to the unknown faulty agent, and defined as:
\begin{equation*}
F_{i^*} = e_{i^*} \otimes I_2,
\end{equation*} 
where $e_{i^*}$ is the $i^{* \text{th}}$ canonical vector of appropriate size. 

A possible interpretation for the fault $\delta$ can be given as follows. The interaction between the agents occurs in the form of relative distance measurements; therefore, the measurements for agent $i$, $y_{i}(k)\in\mathbb{R}^{2m_{i}}$, $i=1,\dots,n$, are:
\begin{equation} \label{eq:measure}
y_{i}(k) = \begin{bmatrix}
x_{i}(k) - x_{j_1}(k) \\
\vdots \\
x_{i}(k) - x_{j_{m_{i}}}(k) \end{bmatrix} = C_{i}{\bf x}(k),
\end{equation}
where the indexes $j_1,\dots,j_{m_{i}}$ indicate the neighbors of agent $i$ and $C_i\in\mathbb{R}^{2m_i \times 2n}$ is the corresponding measure matrix. The state update equation can then be written in terms of the measurements as:
\begin{equation} \label{eq:vel_measure}
x_{i}(k+1) =x_{i}(k) -\epsilon({\bf 1}_{m_{i}}^T\otimes I_2)\,y_{i}(k),
\end{equation}
where ${\bf 1}_m\in\mathbb{R}^m$ is a vector having all its elements equal to $1$. At time step $k_d$, agent $i^*$ experiences a fault in one or more of its sensors. Denoting by $m_{i^*}$ the degree of node $i^*$, we assume each of the post-fault measurements to be biased by a non-homogeneous term $\delta_{i^*j}\in\mathbb{R}^2$, with $j=1,\dots, m_{i^*}$. Then, the vector of measurements for the faulty agent is:
\begin{equation} \label{eq:compactFault}
y_{i^*}(k) = \begin{bmatrix}
x_{i^*}(k) - x_{j_1}(k) + \delta_{i^*j_1} \\
\vdots \\
x_{i^*}(k) - x_{j_{m_{i^*}}}(k) + \delta_{i^*j_{m_{i^*}}} \end{bmatrix} = C_{i^*}\,{\bf x} + \hat{\delta}
\end{equation}
where $ \hat{\delta}=[ \delta_{i^*j_1}^T,\dots,\delta_{i^*j_{m_{i^*}}}^T]^T $, is the compact representation of all faults corresponding to each neighbor in $\mathcal{N}_{i^*}$. By applying~(\ref{eq:compactFault}) to~(\ref{eq:vel_measure}), by inspection we note that the fault acting on the system in~(\ref{eq:switchedsystem}) corresponds in this case to:
\begin{equation*}
\delta=-({\bf 1}_{m_{i^*}}^T\otimes I_2)\,\hat{\delta}.
\end{equation*}

\subsection{Fault Observer Design}
We now turn our discussion to the design of the filters deployed on the one agent of the team in charge of detecting and estimating a fault in any robot of the team, including itself. We refer to this agent as the {\it observer agent}, and we denote quantities relative to it with a subscript $o$; e.g., $C_o$ represents the observer agent's measurements matrix. 

\begin{rmk}
The purpose of the observer $o$ is to:
\begin{itemize}
\item find the index $i^*$ corresponding to the  faulty agent;
\item estimate the fault vector $\delta$;
\item find the time of fault occurrence $k_d$.
\end{itemize}
\end{rmk}

With reference to Fig.~\ref{fig:filterbank}, consider a bank of $n$ filters, each designed to detect a fault in a specific agent of the team. Note that by defining $n$ filters, we allow the observer to be the faulty agent and being capable of detecting itself as the faulty agent. The common input for each filter in the bank is the observer measure $y_o(k)$, while each filter outputs a set of signals called residuals. When residuals are sensitive to only a single fault the design belongs to the category of Fault Isolation Filter (FIF). We denote quantities associated with each filter of the bank by the subscript $i$.

\begin{figure}[h]

\psscalebox{0.5 0.5} 
{
\begin{pspicture}(-3,-2)(8,4.2)
\rput(0,0){\psframe[linecolor=black, linewidth=0.04, dimen=outer,linestyle=dashed,dash=3pt 2pt](12,3)(-1,-1.4)}

\rput(-0.5,0){\psframe[linecolor=black, linewidth=0.04, dimen=outer](3,2)(0,0)
\psline[linecolor=black, linewidth=0.04, arrowsize=0.05cm 2.0,arrowlength=1.4,arrowinset=0.0]{->}(0.7,0)(0.7,-0.5)
\psline[linecolor=black, linewidth=0.04, arrowsize=0.05cm 2.0,arrowlength=1.4,arrowinset=0.0]{->}(2.2,0)(2.2,-0.5)
\large \rput[bl](0.5,-1){$\alpha_1(k)$}
\rput[bl](2.0,-1){$\gamma_1(k)$}
\LARGE \rput[bl](0.8,0.6){$K_1$}
}
\rput(3.,0){\psframe[linecolor=black, linewidth=0.04, dimen=outer](3,2)(0,0)
\psline[linecolor=black, linewidth=0.04, arrowsize=0.05cm 2.0,arrowlength=1.4,arrowinset=0.0]{->}(0.7,0)(0.7,-0.5)
\psline[linecolor=black, linewidth=0.04, arrowsize=0.05cm 2.0,arrowlength=1.4,arrowinset=0.0]{->}(2.2,0)(2.2,-0.5)
\large \rput[bl](0.5,-1){$\alpha_2(k)$}
\rput[bl](2.0,-1){$\gamma_2(k)$}
\LARGE \rput[bl](0.8,0.6){$K_2$}
}
\rput(8,0){\psframe[linecolor=black, linewidth=0.04, dimen=outer](3,2)(0,0)
\psline[linecolor=black, linewidth=0.04, arrowsize=0.05cm 2.0,arrowlength=1.4,arrowinset=0.0]{->}(0.7,0)(0.7,-0.5)
\psline[linecolor=black, linewidth=0.04, arrowsize=0.05cm 2.0,arrowlength=1.4,arrowinset=0.0]{->}(2.2,0)(2.2,-0.5)
\large \rput[bl](0.5,-1){$\alpha_n(k)$}
\rput[bl](2.0,-1){$\gamma_n(k)$}
\LARGE \rput[bl](0.8,0.6){$K_n$}
}

\psline[linecolor=black, linewidth=0.04, arrowsize=0.05cm 2.0,arrowlength=1.4,arrowinset=0.0]{->}(6,3.8)(6,2.6)
\psline[linecolor=black, linewidth=0.04](1,2.5)(9.5,2.5)
\psline[linecolor=black, linewidth=0.04, arrowsize=0.05cm 2.0,arrowlength=1.4,arrowinset=0.0]{->}(1,2.5)(1,2)
\psline[linecolor=black, linewidth=0.04, arrowsize=0.05cm 2.0,arrowlength=1.4,arrowinset=0.0]{->}(4.5,2.5)(4.5,2)
\psline[linecolor=black, linewidth=0.04, arrowsize=0.05cm 2.0,arrowlength=1.4,arrowinset=0.0]{->}(9.5,2.5)(9.5,2)


\large \rput[bl](6,4){$y_o(k)$}
\rput[bl](6.8,1){$\dots$}
\rput[bl](-0.5,3.2){Fault Identification Filter}
\end{pspicture}

}
	\caption{Bank of fault identification filters deployed on the observer. Common input is the observer's measure $y_o(k)$ and outputs are $2n$ residuals. $K_i$ is the gain of the $i^{\text{th}}$ filter. \label{fig:filterbank}}
\end{figure}
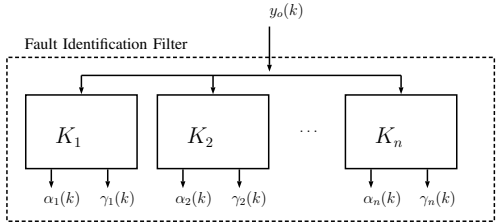

Assuming a linear state observer for the dynamics in~(\ref{eq:consensusCompact}), the update equation for the state estimate in the $i^{th}$ filter is:
\begin{eqnarray}\label{eq:genFilter}
\begin{aligned}
\hat{{\bf x}}_o^i(k+1) &= (A-K_i C_o)\hat{{\bf x}}_o^i (k) + K_i\,y_o(k) \\
\hat{y}^i_o(k) &= C_o\,\hat{{\bf x}}_o^i(k). 
\end{aligned}
\end{eqnarray}

Following the approach described in \cite{keller1999fault}, we derive a formulation for the gain $K_i$ such that the filter outputs have some desired directional properties. To this end, the estimation error in filter $i$ at time $k$ is $e_i(k) = {\bf x}(k)-\hat{\bf x}_o(k)$, and under the effect of the fault we have:
\begin{eqnarray}\label{eq:errorFilter}
\begin{aligned}
e_i(k+1) &= (A-K_i C_o)e_i(k) + \epsilon F_i \, \delta \\
q_i(k) &= C_o\,e_i(k),
\end{aligned} 
\end{eqnarray}
where $q_i(k)$ is the residual relative to the filter $i$. In a similar way, denoting with $\bar{e}_i(k)$ and $\bar{q}_i(k)$ $i^{\text{th}}$ filter's estimate error and the residual for the system not subject to the fault respectively, we have:
\begin{eqnarray}\label{eq:errorFilter-nofault}
\begin{aligned}
\bar{e}_i(k+1) &= (A-K_i C_o)\bar{e}_i(k) \\
\bar{q}_i(k) &= C_o\,\bar{e}_i(k).
\end{aligned} 
\end{eqnarray}

We now introduce the concept of {\it fault detectability index} required for the definition of the FIF.
\begin{defn}\label{def:rho}
For the linear time invariant system~(\ref{eq:switchedsystem}) the {\it fault detectability index} $\rho_{i^*}\in\mathbb{R}$ is
\begin{equation}
\rho_{i^*} = \min \{ v\,:\, C_o A^{v-1}F_{i^*} \neq 0, \quad v=1,2,\dots \}.
\end{equation}
as defined in~\cite{liu1997fault}.
\end{defn}

If $\mathcal{G}$ is connected, the system has finite fault detectability index. Before proving this result we introduce the following lemma, by slightly reformulating the result in~\cite{olfati2007consensus}.
\begin{lemma}\label{lem:stochastic}
If $\mathcal{G}=(V,E)$ is connected and $\epsilon\leq\frac{1}{\max(m_i)}$, for $i=1,\dots,n$, then the matrix $A$ in~(\ref{eq:consensusCompact}) is stochastic.
\end{lemma}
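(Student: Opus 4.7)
The plan is to verify the two defining properties of a row-stochastic matrix directly for $A = I_{2n} - \epsilon\, L \otimes I_2$: (i) each row sums to one, and (ii) every entry is non-negative. Both reduce to elementary entry-wise bookkeeping on the Kronecker structure, and the bound on $\epsilon$ will enter only in the second condition.

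For the row-sum property, I would invoke the standard identity $L\,{\bf 1}_n = 0$, which holds for any undirected graph because $(L\,{\bf 1}_n)_i = m_i - \sum_{j \in \mathcal{N}_i} 1 = 0$. Tensoring preserves this fact: for any $v \in \mathbb{R}^2$, $(L \otimes I_2)({\bf 1}_n \otimes v) = (L\,{\bf 1}_n) \otimes v = 0$. Applied to $v = {\bf 1}_2$ (or more directly, to each column of $I_2$), this says that the row sums of $L \otimes I_2$ all vanish, so each row of $A$ sums to $1 - \epsilon \cdot 0 = 1$.

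For non-negativity, I would examine the entries of $A$ block-wise, indexing rows and columns by pairs $(i,c)$ with $i \in \{1,\dots,n\}$ and $c \in \{1,2\}$. Off-block entries ($c \neq d$) are zero; within a block ($c = d$), the entries are $1 - \epsilon m_i$ on the diagonal, $\epsilon$ for every neighboring pair $(i,j) \in E$, and $0$ otherwise. The off-diagonal block entries therefore lie in $\{0,\epsilon\}$ and are automatically non-negative, while the diagonal entries are non-negative precisely when $1 - \epsilon m_i \geq 0$ for every $i$, which is exactly the hypothesis $\epsilon \leq 1/\max_i m_i$. Combining the two properties yields that $A$ is stochastic.

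The argument is essentially routine; the only mild obstacle is keeping the Kronecker-product indexing straight so that the scalar-Laplacian reasoning extends cleanly to $A \in \mathbb{R}^{2n \times 2n}$. I note that connectivity of $\mathcal{G}$ is not actually used in establishing stochasticity itself: it is listed in the hypotheses because it is a standing assumption throughout the paper and is needed for the subsequent finiteness-of-detectability-index result that this lemma is invoked to support.
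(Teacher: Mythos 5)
Your proof is correct and follows essentially the same route as the paper's: row sums equal one via the zero row sums of the Laplacian (preserved under the Kronecker product), and non-negativity by inspecting the entries, with the diagonal condition $1-\epsilon m_i\geq 0$ giving exactly the bound $\epsilon\leq 1/\max_i m_i$. Your side remark about connectivity is also consistent with the paper, whose only use of it is to guarantee $m_i\geq 1$ so that the bound on $\epsilon$ is well-defined (and implies $\epsilon\leq 1$).
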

\begin{proof} 
The matrix $A\in\mathbb{R}^{2n\times 2n}$ is row stochastic if its entries satisfy $0\leq A_{ij} \leq 1$, for all $i,j=1,...,2n$ and $\sum_{j=1}^{2n} A_{ij}=1$, for all $i=1,\dots,2n$. From the definition of the Laplacian of $\mathcal{G}$, $\sum_{j=1}^n -\epsilon L_{ij}=0$, for all $i=1,\dots,n$. Then, since $A=( I_{2n}-\epsilon L\otimes I_2)$, we note that $\sum_{j=1}^{2n} A_{ij} = 1- \sum_{j=1}^{2n} \epsilon(L\otimes I_2)_{ij}=1$, for all $i=1,\dots,2n$.

Again, by definition of the graph Laplacian, we have $L_{ii}=m_i$ (where $m_i\geq 1$ since $\mathcal{G}$ is connected), $L_{ij}=-1$ if $(i,j)\in E$, and $L_{ij}=0$ otherwise. Then, since the only non-null elements of $A$ are $A_{(2i-1)(2i-1)}=A_{(2i)(2i)}=1-\epsilon\, m_i$, for $i=1,\dots,n$, and $A_{(2i-1)(2j-1)}=A_{(2i)(2j)}=\epsilon$, for $(i,j)\in E$, $A$ is stochastic if conditions: 
\begin{equation}\label{eq:sthocCond}
0 \leq 1-\epsilon\,m_i \leq 1 \qquad
0 \leq \epsilon \leq 1
\end{equation}
are satisfied. Finally, since $m_i\geq 1$ and $\epsilon>0$, both~(\ref{eq:sthocCond}) hold if and only if $\epsilon\leq\frac{1}{\max(m_i)}$.
\end{proof}


For the graph $\mathcal{G}$, the geodesic distance between a pair of nodes is the length of the shortest path connecting them. We introduce the geodesic function $g: V\times V \rightarrow \mathbb{N}$ and $g_{i,j}$ is the geodesic distance between nodes $i$ and $j$.
\begin{theorem}\label{th:geodesic}
Under the hypotheses of Lemma~\ref{lem:stochastic}, the fault detectability index corresponds to the geodesic distance, i.e. $\rho_{i^*}=g_{o,i^*}$. 
\end{theorem}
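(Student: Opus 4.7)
The plan is to strip away the Kronecker structure, reduce the definition of $\rho_{i^*}$ to a question about the zero/non-zero pattern of powers of the scalar matrix $\tilde A := I_n - \epsilon L$, and close with a standard walk-counting argument on $\mathcal{G}$.

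First, I would observe that $A = \tilde A\otimes I_2$, $F_{i^*}=e_{i^*}\otimes I_2$, and the observer's measurement matrix can be written $C_o = \tilde C_o \otimes I_2$, where $\tilde C_o\in\mathbb{R}^{m_o\times n}$ has $\ell$-th row $e_o^T - e_{j_\ell}^T$, $j_\ell\in\mathcal{N}_o$. Using $(M\otimes I_2)(N\otimes I_2)=MN\otimes I_2$,
\[
C_o A^{v-1}F_{i^*} \;=\; \bigl(\tilde C_o\,\tilde A^{v-1} e_{i^*}\bigr)\otimes I_2,
\]
so $\rho_{i^*}$ is the smallest $v\ge 1$ for which the scalar vector $w_v := \tilde C_o\,\tilde A^{v-1} e_{i^*}\in\mathbb{R}^{m_o}$ is nonzero. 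By construction, $(w_v)_\ell = [\tilde A^{v-1}]_{o,i^*} - [\tilde A^{v-1}]_{j_\ell,i^*}$.

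Next I would pin down the zero pattern of $\tilde A^k$. By Lemma~\ref{lem:stochastic}, $\tilde A$ is nonnegative and stochastic; taking $\epsilon < 1/\max_i m_i$ strictly, every diagonal entry $\tilde A_{ii}=1-\epsilon m_i$ is strictly positive and $\tilde A_{ij}=\epsilon>0$ iff $(i,j)\in E$. Thus $\tilde A$ has the same zero/non-zero pattern as $I_n+\mathrm{Adj}(\mathcal{G})$; because all entries are nonnegative, no cancellation can occur in matrix powers, so $[\tilde A^k]_{p,q}>0$ iff $[(I_n+\mathrm{Adj}(\mathcal{G}))^k]_{p,q}>0$. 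Thanks to the self-loops that let one pad any walk to any longer length, this is equivalent to $g_{p,q}\le k$. With this characterization, the theorem becomes a short triangle-inequality argument: if $v<g_{o,i^*}$ then $g_{o,i^*}>v-1$ and $g_{j_\ell,i^*}\ge g_{o,i^*}-1 > v-1$ for every neighbor $j_\ell$ of $o$, so both $[\tilde A^{v-1}]_{o,i^*}$ and $[\tilde A^{v-1}]_{j_\ell,i^*}$ vanish and $w_v=0$; at $v=g_{o,i^*}$ we still have $g_{o,i^*}>v-1$ so $[\tilde A^{v-1}]_{o,i^*}=0$, but some neighbor $j_{\ell^*}\in\mathcal{N}_o$ sits on a shortest $o$-$i^*$ path with $g_{j_{\ell^*},i^*}=v-1$, forcing $[\tilde A^{v-1}]_{j_{\ell^*},i^*}>0$. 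The $\ell^*$-entry of $w_v$ is then strictly negative, so $w_v\neq 0$ and $\rho_{i^*}=g_{o,i^*}$.

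The main obstacle I anticipate is precisely the combinatorial step above: ruling out accidental cancellation in $[\tilde A^{v-1}]_{p,q}$, a sum of many products of entries of $\tilde A$. Nonnegativity kills this, but it is crucial that every self-loop be present, i.e.\ that $\epsilon<1/\max_i m_i$ holds strictly; otherwise a max-degree node would contribute no self-loop and on bipartite substructures walks of the wrong parity could be missing, breaking the clean equivalence $[\tilde A^k]_{p,q}>0 \iff g_{p,q}\le k$. Once the strict inequality is in place, the remainder is bookkeeping with geodesic distances on $\mathcal{G}$.
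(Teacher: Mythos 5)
Your proof is correct and is essentially the paper's argument in more explicit clothing: the paper invokes Lemma~\ref{lem:stochastic} to view the columns of $A^{v-1}F_{i^*}$ as probability distributions of random walks started at $i^*$, and concludes that at step $g_{o,i^*}-1$ the mass at $o$ is zero while some neighbor of $o$ carries positive mass, so that applying $C_o$ first produces a nonzero residual direction at $v=g_{o,i^*}$; your Kronecker factorization $C_oA^{v-1}F_{i^*}=(\tilde C_o\tilde A^{v-1}e_{i^*})\otimes I_2$ and entrywise walk-support argument is the same mechanism (nonnegativity of $\tilde A=I_n-\epsilon L$ precludes cancellation, and the support of $\tilde A^{v-1}e_{i^*}$ is the ball of radius $v-1$ around $i^*$), just written out more carefully than the paper does. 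One remark: the strict inequality $\epsilon<1/\max_i m_i$ that you call crucial is not actually needed, and imposing it makes your proof cover slightly less than the theorem's stated hypothesis $\epsilon\le 1/\max_i m_i$. Your argument never uses the padded equivalence $[\tilde A^k]_{p,q}>0\iff g_{p,q}\le k$ in the regime $g_{p,q}<k$: the vanishing direction ($g_{p,q}>k\Rightarrow[\tilde A^k]_{p,q}=0$) holds with or without self-loops, and the only positivity you invoke is at exact distance $g_{j_{\ell^*},i^*}=v-1$, where the single shortest path already contributes $\epsilon^{v-1}>0$ and nonnegativity keeps the remaining terms from cancelling it. So you can drop the strictness and recover the full statement, in line with the paper.
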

\begin{proof}
Consider a discrete time random walk on $\mathcal{G}$ governed by the stochastic transition matrix $A$. Denoting with $\hat{F}(0)\in\mathbb{R}^{2n\times2}$ the initial probability distribution of two independent processes over $\mathcal{G}$, we note that $\hat{F}(v)=~[\hat{F}_1(v), \hat{F}_2(v)]=A^v\hat{F}(0)$ represents the probability distribution of the walks at time step $v$. Then, by definition of $F_{i^*}=e_{i^*}\otimes I_2$, the process described by $\hat{F}(v)=A^vF_{i^*}$, can also be interpreted as the probability distributions of two identical walks at step $v$, both started at node $i^*$. 

After $v=g_{o,i^*}-1$ steps, the probability that the walk (both walks are identical) reached node $o$ is zero, and therefore $\hat{F}_{o\ell}(g_{o,i^*}-1)=0$. Similarly, after the same number of steps, it is easy to verify that:
\begin{equation} \label{eq:minngb}
\exists\,j\in\mathcal{N}_o \quad g_{o,i^*}-1=\min\{v: \hat{F}_{j\ell}(v) \neq 0 \},
\end{equation}
where $\ell=1,2$ is the index of the walks. From the definition of the measurements matrix $C_o$ in~(\ref{eq:measure}), for walk $\ell$ we write:
\begin{equation*}
C_o \hat{F}_\ell(g_{o,i^*}-1) = \begin{bmatrix}
\hat{F}_{o\ell}(g_{o,i^*}-1) - \hat{F}_{j_1\ell}(g_{o,i^*}-1) \\
\vdots \\
\hat{F}_{o\ell}(g_{o,i^*}-1) - \hat{F}_{j_{m_{o}}\ell}(g_{o,i^*}-1) \end{bmatrix}
\end{equation*}
where, similarly to~(\ref{eq:measure}), indexes $j_1,\dots,j_{m_{o}}$ correspond to the neighbors of agent $o$. Finally, since $\hat{F}_{o\ell}(g_{o,i^*}-1)=0$ and~(\ref{eq:minngb}) we conclude that:
\begin{equation}
\min\{v: C_o A^{v-1}F_{i^*} \neq 0\}=g_{o,i^*},
\end{equation} 
and from Definition~\ref{def:rho}, it follows that $\rho_{i^*}=g_{o,i^*}$.
\end{proof}

In other words, the fault detectability index can be viewed as the number of steps required for the fault to affect an observer's neighbor and therefore, being visible to the observer itself. To this end, fault detectability in a network can be studied similarly to its controllability~\cite{yaziciouglu2016graph}.

\begin{corollary}
Under the hypothesis of Lemma~\ref{lem:stochastic}, for every choice of observer and faulty agent, a finite fault detectability index always exists.
\end{corollary}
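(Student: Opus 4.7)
The plan is to reduce the corollary immediately to Theorem~\ref{th:geodesic} together with the standing connectivity assumption. Since the hypotheses of Lemma~\ref{lem:stochastic} require that $\mathcal{G}$ be connected and $\epsilon \leq \frac{1}{\max(m_i)}$, Theorem~\ref{th:geodesic} applies and yields the identity $\rho_{i^*} = g_{o,i^*}$ for any choice of observer $o$ and faulty agent $i^*$. Thus finiteness of $\rho_{i^*}$ is equivalent to finiteness of the geodesic distance $g_{o,i^*}$ in $\mathcal{G}$.

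The second step is to invoke a basic graph-theoretic fact: on a finite connected graph, any two vertices are joined by at least one path, so the shortest such path has finite length, i.e.\ $g_{o,i^*} < \infty$ for every pair $(o, i^*) \in V \times V$. Combining this with the identity from Theorem~\ref{th:geodesic} yields $\rho_{i^*} < \infty$ for every choice of observer and faulty agent, which is exactly the claim of the corollary.

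There is essentially no obstacle here; the corollary is a direct reading of Theorem~\ref{th:geodesic} under the connectivity hypothesis already baked into Lemma~\ref{lem:stochastic}. The only point worth making explicit in the write-up is that the case $o = i^*$ (observer detecting itself) is also covered, with $g_{o,o} = 0$, so that $\rho_{i^*}$ is well-defined and finite in this degenerate case as well, consistent with Definition~\ref{def:rho} once one verifies that $C_o A^{v-1} F_o$ is nonzero for some small $v$; this follows because the observer's own position appears in every row of $C_o$, so even the $v=1$ term $C_o F_o$ is nonzero.
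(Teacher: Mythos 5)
Your proposal is correct and follows essentially the same route as the paper: invoke Theorem~\ref{th:geodesic} to identify $\rho_{i^*}$ with the geodesic distance $g_{o,i^*}$, then use the fact that a finite connected graph has finite geodesic distance between every pair of nodes. Your additional remark on the degenerate case $o=i^*$ (where $g_{o,o}=0$ but Definition~\ref{def:rho} forces $\rho \geq 1$, with $C_oF_o \neq 0$ giving $\rho_o = 1$) is a careful touch the paper omits, but it does not change the argument.
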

\begin{proof}
For all connected graphs there exists a finite geodesic distance between each pair of nodes. Thus, this result directly follows from Theorem~\ref{th:geodesic}. 
\end{proof}	

\begin{defn}
The fault detectability matrix for the filter $i$, namely $D_i\in\mathbb{R}^{2m\times 2}$, is defined as $D_i = C_o\,\Psi_i$, where $\Psi_i = A^{\rho_i-1} \epsilon F_i$.
\end{defn}

The following is a main result from~\cite{keller1999fault}.
\begin{theorem} \label{th:keller}
Assume the following parametrization:
\begin{equation}\label{eq:K}
K_i = \omega_i\Pi_i + \bar{K}_i\Sigma_i
\end{equation}
with $\omega_i=A \Psi_i $, $\Pi_i=(D_i)^+$ \footnote{$A^+$ is the pseudoinverse (or Moore-Penrose inverse) of the matrix $A$}, $\Sigma_i=\beta_i(I_{m_o}-D_i\Pi_i)$, where $\omega_i\in\mathbb{R}^{2n\times 2}$, $\Pi_i\in\mathbb{R}^{2 \times 2m_o}$ and $\beta_i\in\mathbb{R}^{2m-2\times 2m}$ is an arbitrarily chosen matrix such that the matrix $\Sigma_i~\in~\mathbb{R}^{2m-2\times 2m}$ has full row rank; then, the following constraint is always satisfied:
\begin{equation*}
(A-K_iC_o)\Psi_i = 0
\end{equation*}
and we can write the residual at time $k$ as:
\begin{equation} \label{eq:qk}
q_i(k) = \bar{q}_i(k)+ D_i\delta(k-\rho_i)
\end{equation}
\end{theorem}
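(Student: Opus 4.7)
The plan is to prove the two claims in sequence: first, verify by direct algebra that the parametrization makes $\Psi_i$ a zeroing direction of $A - K_i C_o$, and second, use this property together with the definition of $\rho_i$ to evaluate the closed-form solution of the error recursion for $e_i - \bar{e}_i$.

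For the invariance $(A-K_i C_o)\Psi_i = 0$, I would expand using $A\Psi_i = \omega_i$ and $C_o \Psi_i = D_i$, obtaining $(A-K_i C_o)\Psi_i = \omega_i - K_i D_i$. Substituting the parametrization yields $K_i D_i = \omega_i \Pi_i D_i + \bar{K}_i \Sigma_i D_i$. If $\Pi_i D_i = I_2$, then $\Sigma_i D_i = \beta_i(I_{m_o} - D_i \Pi_i)D_i = \beta_i(D_i - D_i) = 0$, so $K_i D_i = \omega_i$ and the constraint follows. To get $\Pi_i D_i = D_i^+ D_i = I_2$, one uses that $D_i$ has full column rank; this is inherited from the Kronecker block structure $C_o = \tilde{C}_o \otimes I_2$ and $F_i = e_i \otimes I_2$, which forces $D_i = \tilde{D}_i \otimes I_2$ with $\tilde{D}_i \neq 0$ by the definition of $\rho_i$.

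For the residual identity, define $\tilde{e}_i(k) = e_i(k) - \bar{e}_i(k)$. Subtracting~(\ref{eq:errorFilter-nofault}) from~(\ref{eq:errorFilter}) gives $\tilde{e}_i(k+1) = (A - K_i C_o)\tilde{e}_i(k) + \epsilon F_i \delta \nu(k)$ with $\tilde{e}_i(k) = 0$ for $k \le k_d$, and unrolling yields $\tilde{e}_i(k) = \sum_{j=0}^{k-1-k_d}(A - K_i C_o)^j \epsilon F_i \delta$ for $k > k_d$. The heart of the proof is an induction showing $(A - K_i C_o)^v \epsilon F_i = A^v \epsilon F_i$ for $0 \le v \le \rho_i - 1$: in the induction step one peels off $K_i C_o A^{v} \epsilon F_i$, which vanishes for $v \le \rho_i - 2$ by Definition~\ref{def:rho}. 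Then at $v = \rho_i$ the first part of the theorem gives $(A - K_i C_o)\Psi_i = 0$, so all subsequent powers vanish. Multiplying by $C_o$ annihilates every term except $v = \rho_i - 1$, where $C_o \Psi_i = D_i$. Consequently $q_i(k) - \bar{q}_i(k) = D_i \delta$ exactly when $k - \rho_i \geq k_d$ and $0$ otherwise, which matches the compact notation $D_i \delta(k - \rho_i)$ where the argument indicates a $\rho_i$-step delay of the fault signal.

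The main obstacle will be the combined induction and convolution bookkeeping in the second part, since one must simultaneously exploit Definition~\ref{def:rho} to kill low-order terms strictly before reaching $\rho_i$, and the first part of the theorem to kill all orders from $\rho_i$ onward. A smaller but easy-to-overlook point is justifying $D_i^+ D_i = I_2$, which is not immediate from general pseudoinverse identities but follows cleanly once the $I_2$ Kronecker structure of the measurement and fault matrices is exposed.
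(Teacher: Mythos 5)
Your proof is correct, and it is worth noting that the paper itself does not prove this statement at all: it is quoted from Keller's fault-isolation filter construction and the text explicitly defers the details to that reference, only later using the identities $\Pi_i D_i = I_2$ and $\Sigma_i D_i = 0$ without justification. Your argument supplies exactly the two ingredients a self-contained proof needs. The first part (expanding $(A-K_iC_o)\Psi_i = \omega_i - K_iD_i$ and killing $\Sigma_i D_i$) is the standard route; one small remark is that $\Sigma_i D_i = \beta_i(I-D_i\Pi_i)D_i = 0$ holds unconditionally from the Moore--Penrose identity $D_iD_i^+D_i = D_i$, so it need not be conditioned on $\Pi_i D_i = I_2$ — only the step $K_iD_i=\omega_i$ requires $D_i^+D_i = I_2$, i.e.\ full column rank of $D_i$. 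Your derivation of that rank property from the Kronecker structure $A=(I_n-\epsilon L)\otimes I_2$, $C_o=\tilde C_o\otimes I_2$, $F_i=e_i\otimes I_2$, forcing $D_i=\tilde D_i\otimes I_2$ with $\tilde D_i\neq 0$ by Definition~\ref{def:rho}, is a genuine addition: in the original reference this is an assumption on the fault detectability matrix, whereas you show it is automatic for this consensus system — arguably a point the paper should have made when specializing Keller's result. The second part (the induction $(A-K_iC_o)^v\epsilon F_i = A^v\epsilon F_i$ for $v\le\rho_i-1$ via $C_oA^vF_i=0$, truncation of all powers $v\ge\rho_i$ via the first claim, and then applying $C_o$ so that only the $j=\rho_i-1$ term survives as $D_i\delta$ once $k-\rho_i\ge k_d$) is exactly the delay/directionality mechanism behind equation~(\ref{eq:qk}), and your bookkeeping of when the extra term switches on correctly reproduces the meaning of $\delta(k-\rho_i)$. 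No gaps.
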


We refer the reader to~\cite{keller1999fault} for the details of the proof. Thanks to the particular parametrization introduced in Theorem~\ref{th:keller}, for the system affected by the fault, the residual $q_i(k)$ in~(\ref{eq:qk}) is given by the sum of two terms. The first term is the residual of the fault-free system~(\ref{eq:errorFilter-nofault}), while the second term depends on the fault vector $\delta$ delayed by a quantity dependent on the fault detectability index $\rho_i$.

Following the results from Theorem~\ref{th:keller}, substituting~(\ref{eq:K}) in~(\ref{eq:genFilter}) leads to the following final expression for the fault identification filter $i$:
\begin{equation}
\hat{{\bf x}}_o^i(k+1) = A\hat{{\bf x}}_o^i(k) + \omega_i\alpha_i(k) + \bar{K}_i\gamma_i(k), \label{eq:filterUpdate}
\end{equation}
where:
\begin{align}
\alpha_i(k) &= \Pi_i(y_o(k) - C_o\hat{{\bf x}}_o^i(k)) \label{eq:alpha}\\
\gamma_i(k) &= \Sigma_i(y_o(k) - C_o\hat{{\bf x}}_o^i(k)). \label{eq:gamma}
\end{align}
Finally, substituting~(\ref{eq:qk}) in~(\ref{eq:alpha}) and~(\ref{eq:gamma}) leads to:
\begin{align*}
\alpha_i(k)  &= \Pi_i( \bar{q}_i(k) + D_i\delta(k-\rho_i)) = \Pi_i \bar{q}_i(k) + \delta(k-\rho_i) \\
\gamma_i(k) &= \Sigma_i( \bar{q}_i(k) + D_i\delta(k-\rho_i)) =  \Sigma_i \bar{q}_i(k), \label{eq:gamma2}
\end{align*}
where we used that $\Pi_i D_i = I_2$ and $\Sigma_i D_i = 0$.

The last two equations verify the desired directional properties for the output residuals, $\alpha_i(k)$ and $\gamma_i(k)$. In fact, we first note that $\gamma_i(k)$ is decoupled from the fault and its convergence to zero is guaranteed by the stability properties of the fault-free filter~(\ref{eq:errorFilter-nofault}) even under a non-zero error initial state estimate~\cite{keller1999fault}.  Moreover, since $\bar{q}_i$ is independent from the fault, we have $\bar{q}_1(k)=\dots=\bar{q}_n(k)$. Finally, as $\bar{q}_i(k)$ approaches zero, $\alpha_i(k)$ converges to the fault $\delta(k-\rho_i)$.

The filter~(\ref{eq:filterUpdate})-(\ref{eq:gamma}) is replicated on board of the observer agent $n$ times, providing the values of $\alpha_i(k)$ and $\gamma_i(k)$ for $i=1,\dots,n$, at all time steps $k>0$. In order to guarantee the correct detection of the fault occurring on the system, three conditions must be satisfied. First, by denoting with $\|\gamma_i(k)\|$ the Euclidean norm of the $i^{\text{th}}$ fault-free residual, trustworthiness of the filter is verified when $\|\gamma_i(k)\|<\epsilon$, where $\epsilon\in\mathbb{R}$ is a small positive tolerance. In addition, by denoting with $\kappa_1,\kappa_2\in\mathbb{R}$, with $\kappa_2<\kappa_1$, two positive fault detection thresholds, uniqueness of the fault is guaranteed when there exists only one residual above the threshold $\kappa_1$, while all other residuals are below the threshold $\kappa_2$, i.e.:  
\begin{equation} \label{eq:detectioncondition}
\begin{cases} 
\exists!\, i,  i=1,\dots,n \, : \|\alpha_{i}(k)\| > \kappa_1 \\
\forall       j=1,\dots,n, \, j \neq i \,: \|\alpha_{j}(k)\| < \kappa_2 \\
\|\gamma_{i}(k)\| < \epsilon
\end{cases}
\Rightarrow 
\begin{cases}
i^* = i \\
k_d = k-\rho_i \\
\delta = \alpha_{i}(k)
\end{cases}
\end{equation}
We refer to the condition in~(\ref{eq:detectioncondition}) as the {\it fault detection condition}.

So far nothing has been said about the choice of $\bar{K}_i$. The directional properties of the residuals are not affected by the particular choices of $\bar{K}_i$, however it represents an additional degree of freedom in the FIF design. In \cite{keller1999fault} $\bar{K}_i$ minimizes the trace of the estimation error covariance matrix. In \cite{sauter2009robust}, $\bar{K}_i$ was designed with respect to the unknown disturbance.

\subsection{Numerical Simulations for the Consensus Dynamics} \label{subsec:cons_result}
We apply the results of the FIF introduced to a team of 9 mobile robots, with interaction topology as in Fig.~\ref{fig:9agentsTopology}.
\begin{figure}[h]
\begin{center}
\includegraphics[width=0.5\columnwidth]{./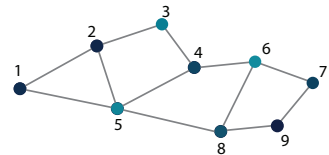}    
\caption{Interaction graph topology for a team of 9 robots.} 
\label{fig:9agentsTopology}
\end{center}
\end{figure}

Starting from random initial positions at time $k=0$, with $\epsilon=0.02$, the robots run the consensus dynamics~(\ref{eq:consensus}). Here, agent $5$ acts as the observer agent, and agent $7$ experiences a fault $\delta=[2, 1]^T$, at $k_d=8$. 

In Fig.~\ref{fig:consensus_residuals} we note the two residuals $\alpha_{i^*}(k)$ (top figure) and $\gamma_{i^*}(k)$ (bottom figure) over time. We observe the components of the fault being correctly estimated in their magnitude (top) and the state estimation error approaching zero from their initial non-zero error (bottom). This confirms the convergence of the state estimate to the real state.
\begin{figure}[h]
\begin{center}
\includegraphics[trim={1.8cm 0 1.8cm 0},width=0.8\columnwidth]{./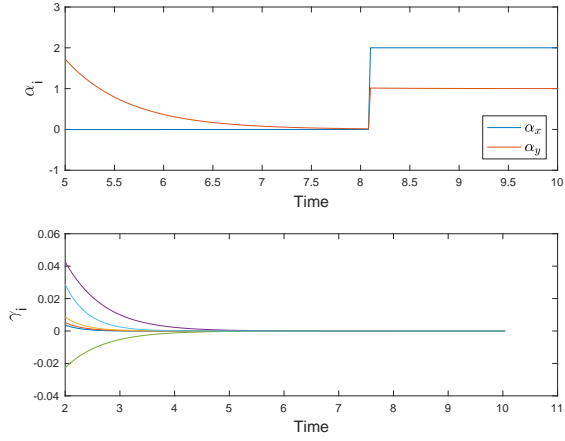}    
\caption{Coupled residuals (top) represent the fault signals. Uncoupled residuals (bottom) represent the estimation error.} 
\label{fig:consensus_residuals}
\end{center}
\end{figure}
\section{Optimal Fault Accommodation}
\label{sec:optimalAccom.}
In the previous section, we discussed the design of the filters bank used by the observer in order to detect a fault occurring in a node of the network. A generic fault was modeled as an exogenous disturbance introduced in the system. In this section, we turn our attention to the leader's accommodating input. In particular, we present an optimal accommodation strategy to be employed by the leader in order to control the robots' centroid, i.e., move the centroid to a predefined recovery position or maintain it to its pre-fault position. Without loss of generality, denoting with $\bar{x}(k)\in\mathbb{R}^{2}$ the centroid of the robots at time step $k$, where $\bar{x}(k)=\frac{1}{n}({\bf 1}_n^T \otimes I_2){\bf x}(k)$, and with $x_f\in\mathbb{R}^2$ the desired final position for $\bar{x}$, leader's objective is to provide the control required such that, under the effect of the fault,
\begin{equation} \label{eq:accobje}
\lim_{k \rightarrow \infty} \bar{x}(k) = x_f.
\end{equation}

Note that, since the only information required by the leader are the fault vector $\delta$, the faulty agent index $i^*$, and the state estimate $\hat{x}$, it is reasonable to assume the leader coinciding with the observer agent. However, if this information can be communicated, this is not required to be necessarily the case. 

\subsection{Control of the Fault and Leader Estimation Filter}
Given the discrete nature of a fault occurring on the system at time step $k_d$, the controlled dynamics can be represented by the following switched controlled consensus:
\begin{equation} \label{eq:leaderUpdate}
\begin{aligned}
x_l(k+1) = x_l(k)& -\epsilon \sum_{j\in \mathcal{N}_i}(x_l(k) - x_j(k)) +  u(k)  \\
u(k) &
\begin{cases}
\neq 0 \, &\text{if} \, k \leq k_d+\rho_{i^*} \\
 = 0     \, &\text{if} \, k > k_d+\rho_{i^*}.
\end{cases}
\end{aligned}
\end{equation}
where $u(k)\in\mathbb{R}^2$ is the leader accommodation control at time $k$. Similarly to what was discussed in the previous section, using~(\ref{eq:leaderUpdate}), the complete post-fault dynamics is:
\begin{equation} \label{eq:consFaultInput}
{\bf x}(k+1) = A{\bf x}(k) + B_l u(k) + \epsilon F_{i^*} \delta, \quad k>k_d+\rho_{i^*},
\end{equation}
where $B_l=\epsilon \bar{B}_l$ and $\bar{B}_l\in\mathbb{R}^{2n \times 2}$ is the control matrix defined by the choice of the leader agent, i.e., $\bar{B}_l = e_l \otimes I_2$. From the post-fault dynamics~(\ref{eq:consFaultInput}), we can define the leader state estimation filter by introducing the control in the filter dynamics~(\ref{eq:genFilter}). Therefore, letting $\hat{{\bf x}}_l(k)$ be the state estimate for the leader agent, we have: 
\begin{multline}\label{eq:leaderFilter}
\begin{aligned}
\hat{{\bf x}}_l(k+1) &= (A-K_{i^*} C_l)\hat{{\bf x}}_l (k) + K_{i^*}\,\hat{y}_l(k) + \\ & \qquad \qquad \qquad \qquad  B_lu(k) +  \epsilon F_{i^*} \delta\\
\hat{y}_l(k) &= C_l\,\hat{{\bf x}}_l(k). 
\end{aligned}
\end{multline}

In order for the leader to achieve the accommodation objective in~(\ref{eq:accobje}), the position of the system's centroid is required. If the state of the system at the initial time is known, from the invariance of $\bar{x}(k)$ under the consensus dynamics, for all $k\leq k_d$:
\begin{equation}\label{eq:centroid}
\bar{x}(k)=\frac{1}{n}({\bf 1}_n^T \otimes I_2){\bf x}(0) = M{\bf x}(0),
\end{equation}
where the definition of $M$ is clear by inspection of~(\ref{eq:centroid}).

Conversely, if the leader does not know the state of the system at the initial time, the position of robots' centroid is also unknown. However, from~(\ref{eq:leaderFilter}) we know that the leader's state estimate $\hat{{\bf x}}_l(k)$ also converges to the centroid of its initial value. Assuming the leader measures its own position $x_l(k)$, it is possible to correct the system state estimate $\hat{{\bf x}}_l$ by the difference between leader's own estimated position, namely $\hat{x}^l_{l}(k)$,  and its measured one. Thus, the position of the centroid at the time of the fault is:
\begin{equation*}
\bar{x}(k_d) = \frac{1}{n}( {\bf 1}_{n}^T \otimes I_2 )( \hat{{\bf x}}_l(k_d) - ({\bf 1}_{n} \otimes I_2)(\hat{x}^l_{l}(k) - x_l(k))).
\end{equation*}

\subsection{Optimal Accommodation Control}
We compute the accommodation control by solving a closed-form receding horizon optimal control problem, in which we assume the system to be completely controllable via the agent $l$~\cite{chipalkatty2013less}. At each time step $k>k_d+\rho_{i^*}$, the solution of the optimal control problem provides a sequence of control inputs $\mathcal{U}(k)=\{u(k+1),\dots,u(k+N-1)\}$, with $N$ being the length of the prediction horizon. The cost to be minimized by the leader is: 
\begin{equation} \label{eq:costFun}
\begin{aligned}
& \underset{u}{\text{min}}
& & \sum_{\tau=0}^{N-1} \| u(k+\tau) \|^2 dt. \\
\end{aligned}
\end{equation}
subject to the system dynamics and the desired $x_f$:
\begin{align}
{\bf x}(k+1) &= A{\bf x}(k) + B_l u(k) + \epsilon F_{i^*} \delta  \label{eq:constr1} \\
M{\bf x}(k+N) &= x_f. \label{eq:constr2}
\end{align}

At the end of the horizon, under the control sequence $\mathcal{U}_k$:
\begin{small}
\begin{equation*}
{\bf x}(k+N) = A^N{\bf x}(k) + \sum_{\tau= 0}^{N-1}A^{N-\tau-1}B_l u(k+\tau) + \epsilon\sum_{\tau = 0}^{N-1}(A^{\tau}) F_{i^*}\delta
\end{equation*} \end{small}
which we rearrange as:
\begin{equation} \label{eq:finalstate}
\sum_{\tau= 0}^{N-1}A^{N-\tau-1}B_l u(k+\tau) = b - A^N{\bf x}(k),
\end{equation}
where $b =  {\bf x}(k+N) -  \epsilon \sum_{\tau = 0}^{N-1}(A^{\tau} ) F_{i^*}\delta$. 

Now, defining the {\it T-Steps Controllability Gramian} \cite{pasqualetti2014controllability} for the discrete time system in~(\ref{eq:consFaultInput}) as $W_c(k,N) = \sum_{\tau= 0}^{N-1}A^{\tau}B_lB_l^T(A^{\tau})^T$, the first element of the control sequence $\mathcal{U}(k)$ to be applied as input to the system is:
\begin{equation}\label{eq:closedform}
u(k)^* = L_k^*M^T(MW_c(k,N)M^T)^{-1}(b - A^N\hat{{\bf x}}_l(k)), 
\end{equation}
where $L_k^* = B_l^T(A^{N-1})^T$.

\subsection{Results}
\label{subsec:resultsControl}
The optimal accommodation strategy is applied to the multi-agent system used in Section~\ref{subsec:cons_result}. A fault $\delta=[2,1]^T$ is applied to agent $8$ at time step $k_d=8$. Once the leader, agent $5$, detects the fault, it initiates the accommodation maneuver. The top of Fig.~\ref{fig:centroidPos} shows the norm of the centroid position $\| \bar{x}(k) \|$ with and without fault accommodation (top), and leader's input components (bottom). As a result of the accommodation strategy, the centroid position remains practically unchanged.

\begin{figure}[h!]
\begin{center}
\includegraphics[trim={1.8cm 0 1.8cm 0},width=0.9\columnwidth]{./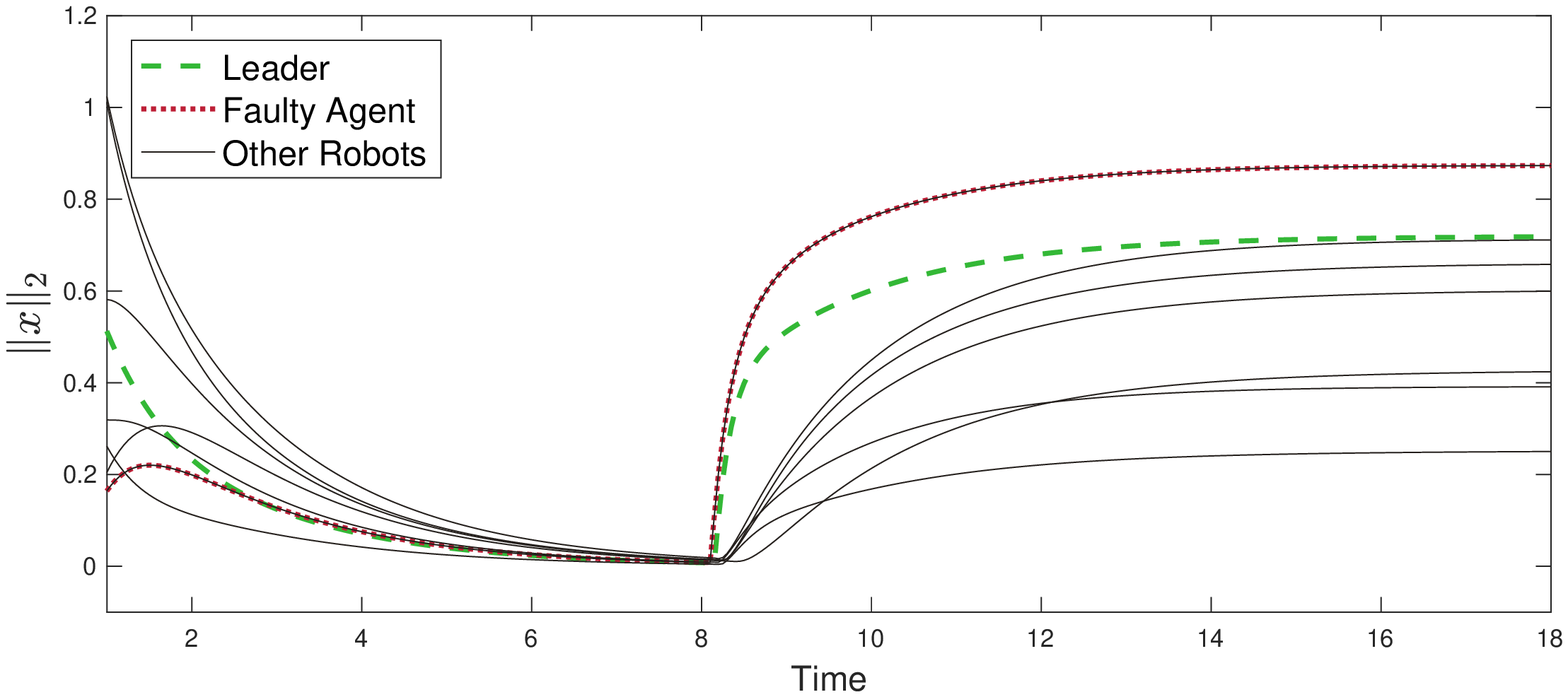}   
\caption{Positions of robots over time. Consensus drives the redezvous until the fault occurs. During accommodation robots reduce their agreement, and reach new equilibrium.\label{fig:statesNorm}} 
\includegraphics[trim={1.8cm 0 1.8cm 0},width=0.9\columnwidth]{./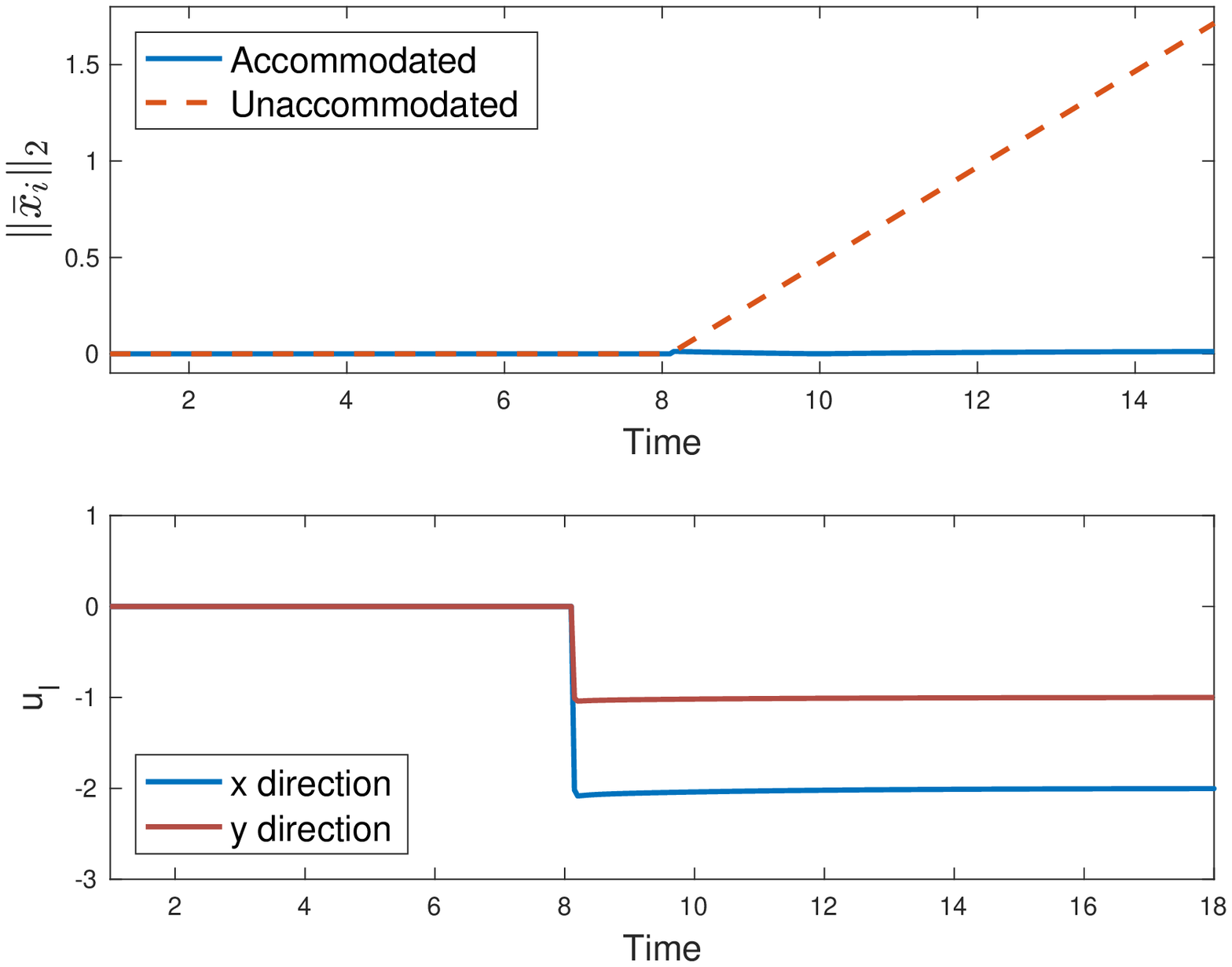}   
\caption{Top - Centroid position with fault accommodation (solid line) and without it (dashed line). During the accommodation phase, the centroid position is practically unchanged. Bottom - Components of accommodation input.\label{fig:centroidPos}} 
\end{center}
\end{figure}

\section{Fault Identification and Accommodation Under Formation Control}
\label{sec:formation}
In this section, we extend the dynamics considered previously to more general scenarios. In particular, we assume that the team of robots runs a consensus-based formation control protocol. Denoting by $d_{ij}\in\mathbb{R}^2$, with $(i,j)\in E$, the desired relative displacement between pairs of neighboring robots, we encode the desired formation in the vector $\phi\in\mathbb{R}^{2n}$, where $\phi_i = \sum_{j \in \mathcal{N}_i} d_{ij}$.

Adding $\phi$ to the update equation in~(\ref{eq:consensusCompact}), we write the formation control protocol as ${\bf x}(k+1) = A{\bf x}(k) + \epsilon\phi$, and the dynamics of the system subject to the fault follow:
\begin{align} \label{eq:switchedsystemformation}
{\bf x}(k+1) &= A{\bf x}(k) + \epsilon F_{i^*} \, \delta \, \nu(k) + \epsilon\phi\\
\nu(k) &=
\begin{cases}
1 \,\, &\text{if} \,\, k \geq k_d \\
0 \,\, &\text{if} \,\, k <    k_d, \\
\end{cases}
\end{align}
where all quantities are the same as in Section~\ref{sec:fif}.

Similarly, it is possible to rewrite the linear filter in~(\ref{eq:genFilter}) for the formation control problem as
\begin{equation} \label{eq:genFilterFormation}
\hat{{\bf x}}_o^i(k+1) = (A-K_i C_o)\hat{{\bf x}}_o^i (k) + K_i\,y_o(k)  + \epsilon\phi.
\end{equation}
Since the formation term $\phi$ does not depend on the state of the system, it can be easily shown that given~(\ref{eq:genFilterFormation}), both~(\ref{eq:errorFilter}) and~(\ref{eq:errorFilter-nofault}) remain unchanged, and consequently, the same estimation gain matrix $K$ computed in~(\ref{eq:K}) still guarantees the desired direction properties for the residuals $\alpha_i$ and $\gamma_i$. By substituting the dynamics of the filters with~(\ref{eq:switchedsystemformation}), the observer applies the same fault detection condition defined in~(\ref{eq:detectioncondition}).

Finally, adding the formation term $\phi$ to the controlled dynamics~(\ref{eq:consFaultInput})
\begin{equation} \label{eq:consFaultInputFormation}
{\bf x}(k+1) = A{\bf x}(k) + B_l u(k) + \epsilon F_{i^*} \delta + \epsilon\phi,
\end{equation}
the final constraint can be rewritten similarly to~(\ref{eq:finalstate}) as:
\begin{equation}
\sum_{\tau= 0}^{N-1}A^{N-\tau-1}B_l u(k+\tau) = b_f - A^N{\bf x}(k),
\end{equation}
where now $b_f =  {\bf x}(k+N) -  \epsilon \sum_{\tau = 0}^{N-1}(A^{\tau} ) (F_{i^*}\delta+\phi)$.

\subsection{Robot Experiments}
\begin{figure*}[h!]
\begin{center}
\begin{subfigure}{.31\textwidth}
  \centering
		\includegraphics[width=0.93\columnwidth]{./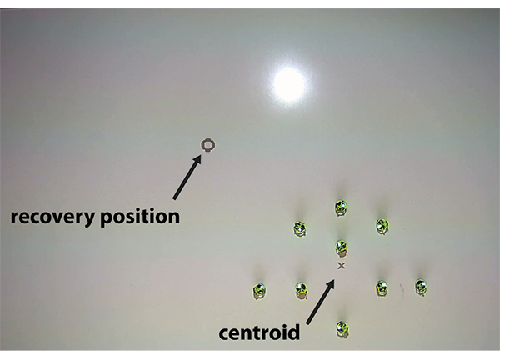} 
  \caption{Formation before the fault.}
\end{subfigure}
\begin{subfigure}{.31\textwidth}
  \centering
		\includegraphics[width=0.93\columnwidth]{./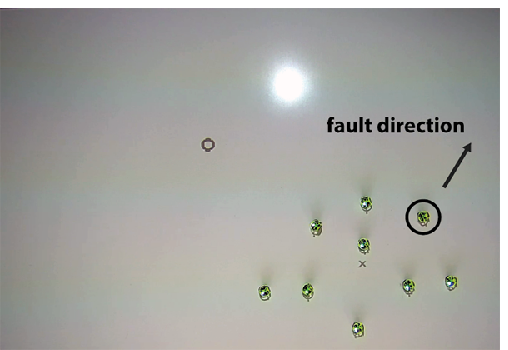} 
  \caption{Fault detected.}
\end{subfigure}	
\begin{subfigure}{.31\textwidth}
  \centering
		\includegraphics[width=0.93\columnwidth]{./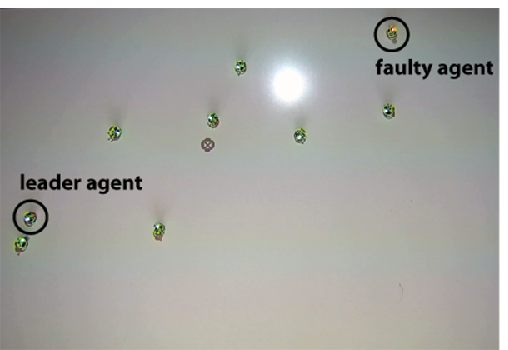}   
  \caption{Centroid moved to recovery position.}
\end{subfigure}	
\caption{Experiment results. Robots performing a formation control task (a). After fault is detected (b), the leader drives the centroid of the team (X in figures) to the recovery position (c). White glare is the light reflaction from a overhead projector. \label{fig:formationPREPOST}} 	
\end{center}
\end{figure*}

Experiments have been performed on the remotely accessible Robotarium~\cite{pickem2017robotarium} platform, with a team of 9 agents performing a formation control protocol. At time step $k_d=45$ a fault vector $\delta=[2,1]^T$ is applied. In this case we assume a {\it recovery position} for the centroid, denoted with a black ring in Fig.~\ref{fig:formationPREPOST} (pictures are taken from an overhead camera). After the fault is detected, the leader compensates for the presence of the faults, and drives the centroid of team (represented by a the black X) to the desired recovery point. In Fig.\ref{fig:formationCentroids} we observe the norm of the centroid moving from the pre-fault position to the desired post-fault value.

\begin{figure}[h!]
	\begin{center}
		\includegraphics[trim={2.4cm 0 2.4cm 0},width=0.9\columnwidth]{./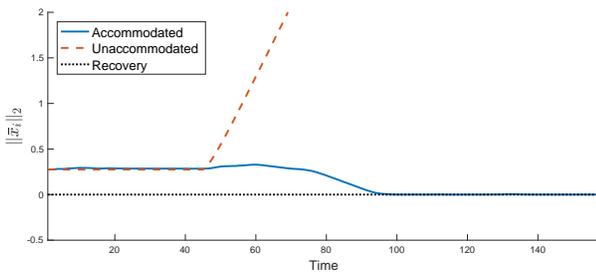}   
		\caption{Centroid position over time. After the fault time $k_d=45$, the accommodation maneuver moves the centroid from its initial position to the desired recovery value, located at 0. Dashed line is a truncated ramp and represents the position of the centroid if the case without accommodation. } 
		\label{fig:formationCentroids}
	\end{center}
\end{figure}

\section{Conclusion}
\label{sec:conc}
Consensus-based protocols in multi-agents systems are highly vulnerable to exogenous disturbances, such as faults. In this paper, a fault identification and accommodation strategy for a static networked  multi-agent robotic system is proposed. Under a linear agreement and formation control, the proposed filter individuates a faulty agent anywhere in the network and estimates the entity of the disturbance introduced. After the fault is detected, an optimal accommodation strategy is employed by a leader in order to control the robots' centroid, and move it to an arbitrary position. 

                                                   
\bibliographystyle{IEEEtran}
\bibliography{IEEEabrv,bibfile.bib}

\begin{thebibliography}{10}
\providecommand{\url}[1]{#1}
\csname url@samestyle\endcsname
\providecommand{\newblock}{\relax}
\providecommand{\bibinfo}[2]{#2}
\providecommand{\BIBentrySTDinterwordspacing}{\spaceskip=0pt\relax}
\providecommand{\BIBentryALTinterwordstretchfactor}{4}
\providecommand{\BIBentryALTinterwordspacing}{\spaceskip=\fontdimen2\font plus
\BIBentryALTinterwordstretchfactor\fontdimen3\font minus
  \fontdimen4\font\relax}
\providecommand{\BIBforeignlanguage}[2]{{%
\expandafter\ifx\csname l@#1\endcsname\relax
\typeout{** WARNING: IEEEtran.bst: No hyphenation pattern has been}%
\typeout{** loaded for the language `#1'. Using the pattern for}%
\typeout{** the default language instead.}%
\else
\language=\csname l@#1\endcsname
\fi
#2}}
\providecommand{\BIBdecl}{\relax}
\BIBdecl

\bibitem{olfati2006flocking}
R.~Olfati-Saber, ``Flocking for multi-agent dynamic systems: Algorithms and
  theory,'' \emph{IEEE Transactions on automatic control}, vol.~51, no.~3, pp.
  401--420, 2006.

\bibitem{mesbahi2010graph}
M.~Mesbahi and M.~Egerstedt, \emph{Graph theoretic methods in multiagent
  networks}.\hskip 1em plus 0.5em minus 0.4em\relax Princeton University Press,
  2010.

\bibitem{garin2010survey}
F.~Garin and L.~Schenato, ``A survey on distributed estimation and control
  applications using linear consensus algorithms,'' in \emph{Networked Control
  Systems}.\hskip 1em plus 0.5em minus 0.4em\relax Springer, 2010, pp. 75--107.

\bibitem{xiao2007distributed}
L.~Xiao, S.~Boyd, and S.-J. Kim, ``Distributed average consensus with
  least-mean-square deviation,'' \emph{Journal of Parallel and Distributed
  Computing}, vol.~67, no.~1, pp. 33--46, 2007.

\bibitem{zhang2008bibliographical}
Y.~Zhang and J.~Jiang, ``Bibliographical review on reconfigurable
  fault-tolerant control systems,'' \emph{Annual reviews in control}, vol.~32,
  no.~2, pp. 229--252, 2008.

\bibitem{patton2007generic}
R.~J. Patton, C.~Kambhampati, A.~Casavola, P.~Zhang, S.~Ding, and D.~Sauter,
  ``A generic strategy for fault-tolerance in control systems distributed over
  a network,'' \emph{European journal of control}, vol.~13, no. 2-3, pp.
  280--296, 2007.

\bibitem{olfati2007distributed}
R.~Olfati-Saber, ``Distributed kalman filtering for sensor networks,'' in
  \emph{Decision and Control, 2007 46th IEEE Conference on}.\hskip 1em plus
  0.5em minus 0.4em\relax IEEE, 2007, pp. 5492--5498.

\bibitem{luo2006distributed}
X.~Luo, M.~Dong, and Y.~Huang, ``On distributed fault-tolerant detection in
  wireless sensor networks,'' \emph{IEEE Transactions on Computers}, vol.~55,
  no.~1, pp. 58--70, 2006.

\bibitem{premaratne2009dempster}
K.~Premaratne, M.~N. Murthi, J.~Zhang, M.~Scheutz, and P.~H. Bauer, ``A
  dempster-shafer theoretic conditional approach to evidence updating for
  fusion of hard and soft data,'' in \emph{Information Fusion, 2009. FUSIon'09.
  12th International Conference on}.\hskip 1em plus 0.5em minus 0.4em\relax
  IEEE, 2009, pp. 2122--2129.

\bibitem{liu2011controllability}
Y.-Y. Liu, J.-J. Slotine, and A.-L. Barab{\'a}si, ``Controllability of complex
  networks,'' \emph{Nature}, vol. 473, no. 7346, pp. 167--173, 2011.

\bibitem{pasqualetti2014controllability}
F.~Pasqualetti, S.~Zampieri, and F.~Bullo, ``Controllability metrics,
  limitations and algorithms for complex networks,'' \emph{IEEE Transactions on
  Control of Network Systems}, vol.~1, no.~1, pp. 40--52, 2014.

\bibitem{chapman2013semi}
A.~Chapman and M.~Mesbahi, ``Semi-autonomous consensus: network measures and
  adaptive trees,'' \emph{IEEE Transactions on Automatic Control}, vol.~58,
  no.~1, pp. 19--31, 2013.

\bibitem{steinberg2005historical}
M.~Steinberg, ``Historical overview of research in reconfigurable flight
  control,'' \emph{Proceedings of the Institution of Mechanical Engineers, Part
  G: Journal of Aerospace Engineering}, vol. 219, no.~4, pp. 263--275, 2005.

\bibitem{visinsky1995dynamic}
M.~L. Visinsky, J.~R. Cavallaro, and I.~D. Walker, ``A dynamic fault tolerance
  framework for remote robots,'' \emph{IEEE Transactions on Robotics and
  Automation}, vol.~11, no.~4, pp. 477--490, 1995.

\bibitem{mueller2014stability}
M.~W. Mueller and R.~D'Andrea, ``Stability and control of a quadrocopter
  despite the complete loss of one, two, or three propellers,'' in
  \emph{Robotics and Automation (ICRA), 2014 IEEE International Conference
  on}.\hskip 1em plus 0.5em minus 0.4em\relax IEEE, 2014, pp. 45--52.

\bibitem{bandyopadhyay2017probabilistic}
S.~Bandyopadhyay, S.-J. Chung, and F.~Y. Hadaegh, ``Probabilistic and
  distributed control of a large-scale swarm of autonomous agents,'' \emph{IEEE
  Transactions on Robotics}, vol.~33, no.~5, pp. 1103--1123, 2017.

\bibitem{shames2011distributed}
I.~Shames, A.~M. Teixeira, H.~Sandberg, and K.~H. Johansson, ``Distributed
  fault detection for interconnected second-order systems,'' \emph{Automatica},
  vol.~47, no.~12, pp. 2757--2764, 2011.

\bibitem{davoodi2014distributed}
M.~R. Davoodi, K.~Khorasani, H.~A. Talebi, and H.~R. Momeni, ``Distributed
  fault detection and isolation filter design for a network of heterogeneous
  multiagent systems,'' \emph{IEEE Transactions on Control Systems Technology},
  vol.~22, no.~3, pp. 1061--1069, 2014.

\bibitem{pasqualetti2012consensus}
F.~Pasqualetti, A.~Bicchi, and F.~Bullo, ``Consensus computation in unreliable
  networks: A system theoretic approach,'' \emph{IEEE Transactions on Automatic
  Control}, vol.~57, no.~1, pp. 90--104, 2012.

\bibitem{liu1997fault}
B.~Liu and J.~Si, ``Fault isolation filter design for linear time-invariant
  systems,'' \emph{IEEE Transactions on Automatic Control}, vol.~42, no.~5, pp.
  704--707, 1997.

\bibitem{keller1999fault}
J.-Y. Keller, ``Fault isolation filter design for linear stochastic systems,''
  \emph{Automatica}, vol.~35, no.~10, pp. 1701--1706, 1999.

\bibitem{sauter2009robust}
D.~Sauter, S.~Li, and C.~Aubrun, ``Robust fault diagnosis of networked control
  systems,'' \emph{International Journal of Adaptive Control and Signal
  Processing}, vol.~23, no.~8, pp. 722--736, 2009.

\bibitem{rodrigues2006fault}
M.~Rodrigues, D.~Theilliol, and D.~Sauter, ``Fault tolerant control design for
  switched systems,'' \emph{IFAC Proceedings Volumes}, vol.~39, no.~5, pp.
  223--228, 2006.

\bibitem{olfati2007consensus}
R.~Olfati-Saber, J.~A. Fax, and R.~M. Murray, ``Consensus and cooperation in
  networked multi-agent systems,'' \emph{Proceedings of the IEEE}, vol.~95,
  no.~1, pp. 215--233, 2007.

\bibitem{yaziciouglu2016graph}
A.~Yaz{\i}c{\i}o{\u{g}}lu, W.~Abbas, and M.~Egerstedt, ``Graph distances and
  controllability of networks,'' \emph{IEEE Transactions on Automatic Control},
  vol.~61, no.~12, pp. 4125--4130, 2016.

\bibitem{chipalkatty2013less}
R.~Chipalkatty, G.~Droge, and M.~B. Egerstedt, ``Less is more: Mixed-initiative
  model-predictive control with human inputs,'' \emph{IEEE Transactions on
  Robotics}, vol.~29, no.~3, pp. 695--703, 2013.

\bibitem{pickem2017robotarium}
D.~Pickem, P.~Glotfelter, L.~Wang, M.~Mote, A.~Ames, E.~Feron, and
  M.~Egerstedt, ``The robotarium: A remotely accessible swarm robotics research
  testbed,'' in \emph{Robotics and Automation (ICRA), 2017 IEEE International
  Conference on}.\hskip 1em plus 0.5em minus 0.4em\relax IEEE, 2017, pp.
  1699--1706.

\end{thebibliography}
%
\end{document}